\documentclass{LNCS/llncs}

\title{How to Pick Your Friends}
\subtitle{A Game Theoretic Approach to P2P Overlay Construction}
\author{Saar Tochner\inst{1} \and Aviv Zohar\inst{1} \\
\email{ \{saart,  avivz\}@cs.huji.ac.il }}
\institute{The Hebrew University of Jerusalem, Israel}

\usepackage{xspace}
\usepackage{amsmath}
\usepackage{amsfonts}
\usepackage{mathrsfs}
\usepackage{graphicx}
\usepackage[outdir=graphs/]{epstopdf}
\usepackage{color}
\usepackage[]{algorithm2e}
\usepackage{caption}
\usepackage{pdfpages}


\usepackage{amsthm}

\begin{document}

\newcommand{\wattack}{W_{att}}
\newcommand{\wattacks}[1]{W_{#1}}
\newcommand{\gameval}{\mathcal{U}}
\newcommand{\easygameval}{\bar{\mathcal{U}}}
\newcommand{\nash}{Nash equilibrium\xspace}
\newcommand{\dif}{C}
\newcommand{\explain}[1]{\overset{\overset{#1}\downarrow}}
\newcommand{\updatefunc}{\mathbb{U}}
\newcommand{\rupdatefunc}{\updatefunc^{-1}}
\newcommand{\remfunc}{\mathscr{R}}
\newcommand{\buildfunc}{\mathscr{B}}
\newcommand{\setvalue}[1]{v_{\{#1\}}}
\newcommand{\sumxs}[1]{\sum_{i=#1}^{R}{x_i}}
\newcommand{\summs}[1]{\sum_{i=#1}^{R}{m_i}}
\newcommand{\masksShouldAdd}[1]{\frac{c_{new} + c_{node} \cdot t_U \cdot y_{#1} \cdot m_{#1}}{W({t_U}^{m_r} {y_{#1}}^{m_{#1}} e^H)^{y_{#1}} - W}}
\newcommand{\netcon}{net_{conn}}
\newcommand{\boldheader}[1]{\vskip 5pt \noindent{\bf #1}}
\newcommand{\avivz}[1]{{\color{red}avivz: #1}}
\newcommand{\todo}[1]{{\color{red}TODO: #1}}

\maketitle

\begin{abstract}
	A major limitation of open P2P networks is the lack of strong identities. This allows any agent to attack the system by creating multiple false personas, thereby disrupting the overlay network's connectivity and sabotaging its operation. In this paper, we explore practical ways to defend P2P networks from such attacks. To do so, we employ a game theoretic approach to the management of each peer's list of known nodes and to the overlay construction mechanisms that utilize this list. We consider the interaction between defender and attacker agents as a zero-sum game. We show that the cost of attacks can be driven up substantially if the defender utilizes available information about peers it chooses to connect to, such as their IP address. In addition to theoretical analysis of the underlying game, we apply our approach to the Bitcoin P2P network and derive effective strategies that guarantee a high safety level against attacks.
\end{abstract}

\section{Introduction}
P2P networks are utilized as an underlying communication layer in many applications such as Bitcoin~\cite{nakamoto2008bitcoin}, BitTorrent~\cite{norberg2009utorrent}, and DHTs~\cite{urdaneta2011survey}.
Unfortunately, they are extremely vulnerable to isolation attacks in which individual nodes that wish to participate in the network connect \textit{only} to attackers and are effectively quarantined from all other honest participants. Attackers can then distort the view of nodes regarding events in the network, filter their messages, change, or delay them at will.

Attacks of this sort give attackers great power. One prominent example is the Bitcoin protocol~\cite{nakamoto2008bitcoin}, in which isolated nodes can have their computational power subverted to attack the rest of the network, or may be blocked from issuing transactions.

In this paper we seek to make the attacker's job more difficult by suggesting strategies for the peers in the P2P network to successfully avoid being quarantined, unless the attacker invests a great deal of resources. In particular, our work models the interaction between the attacker and the defender as a game. The strategy space of the attacker is to choose the nodes to corrupt in the network (at a cost per node), while the defender's strategy space consists of the different sets of nodes he can connect to. Our model leads to \emph{practical} policies; We suggest ways to manage the limited buffer from which agents choose connections. Our work falls in the general category of security games: We consider the attacker and defender as rational agents that seek to maximize their utility.

The main idea that we utilize to increase the cost for the attacker is to take advantage of properties of the connections to peers to try and pick connections that the attacker would be unlikely to control all at once. 
In the main example, we take inspiration from Bitcoin's P2P formation and consider the IP subnet mask of the peers. Our main assumption is that once an attacker purchases an IP address in some range of IPs, it is cheap to gain access to other similar IPs. Our peer-selection strategy in this case will tend to be biased towards selecting peers from many different IP ranges. This affects the way we should manage the IP addresses' buffer. Our main result (Theorem~\ref{best_remfunc}) states that we can implement a safety-level strategy for this game, using only limited memory in which the IPs are stored. Our contribution is thus a highly practical one.

\boldheader{Current P2P network formation}
The mechanisms through which P2P clients choose their connections can vary, but most use the following general technique: Once an initial connection is established with a node (usually with the aid of a centralized server that helps to bootstrap the process), nodes share the IP addresses of others with their connections, and maintain lists of possible peers to connect to. Such potential connections are stored in buffers and often exchanged to keep the lists populated with ``fresh" addresses. Therefore, the core two decisions to be made are: (i) how to select which nodes to evict from a full buffer, and (ii) how to select nodes to connect to.

A naive approach is to form connections to uniformly selected nodes in the buffer (as such random graphs are typically well connected) and to evict nodes uniformly at random from an overflowing buffer. Other approaches involve removing the oldest IPs, assuming that they are the ones most likely to be stale. 

\boldheader{The susceptibility of naive strategies}
Attackers find it cheap to replace all the IPs in the buffer of the victim with IPs of attacker nodes (or with un-assigned IP addresses). The defender then fails to connect to other honest nodes and is quarantined by the attacker.

There are two approaches to an isolation attack: Sybil attacks~\cite{douceur2002sybil}, in which the attacker creates many identities to increase his chance of getting connections; and Eclipse attacks~\cite{singh2006eclipse}, wherein the attacker increases the visibility of each of his nodes according to the peers' selection algorithm (e.g., advertising its own IP more aggressively). If the defender receives many IP addresses belong to the attacker, and naively evicts nodes for his buffer in order to store these new addresses, then the buffer will be overrun and the node will not be able to connect to any honest peers.

Bitcoin employs a more sophisticated eviction strategy, sorting IP addresses into buckets by combining the IP of the sender node with the IP address in the message itself. Recent work shows this mechanism to be susceptible as well, as it still allows attacker to overrun the buffer easily~\cite{heilman2015eclipse}.

The remainder of the paper is structured as follows: In the next section we briefly review related work, then in section~\ref{sec::model} we define utility functions for the players, the strategy space and the agents' memory buffer. Section~\ref{sec::analysis_unrestricted_buffer} shows results for the game without any restrictions on the buffer size. In section~\ref{sec::buffer_restricted_imp} we restrict the size of the memory buffer and derive the safety-level strategies for the defender. Section~\ref{sec::costs_by_ip_mask} addresses a specific implementation of the cost function and contains our main practical theorem. Finally, subsection~\ref{sec::evaluations} evaluates our method as implemented on the Bitcoin network, using real data that we crawled, and compares the result to other benchmarks. We conclude and discuss future work in section~\ref{sec::future_work}.

\subsection{Related Work}
Douceur~\cite{douceur2002sybil} was the first to expose the problem of multiple identities (Sybils) in P2P systems lacking strong identities. Many such systems have since been shown to be vulnerable to such attacks~\cite{urdaneta2011survey}.

Bitcoin, a P2P currency system~\cite{nakamoto2008bitcoin}, was designed to work as an open system with no strong identities. Its overlay formation is thus susceptible to attacks~\cite{heilman2015eclipse} \cite{apostolaki2016hijacking}.
Botnets are often structured like P2P networks, and their susceptibility to such attacks has been used to attack the botnets themselves~\cite{jiang2014detecting,jung2016security}.

One approach to defend from Sybils was to utilize a social network to form the peers' connections~\cite{xinhui2015sybil,yu2006sybilguard} (Sybils are assumed to have few connections to honest participants in this setting). Unfortunately, most settings do not have this additional network of relations among peers to use for overlay formation.
Another approach is to make it feasibly hard to create many entities by having to solve a computational puzzle~\cite{zhang2011making}, but this could be difficult to accomplish in different networks, if the nodes lack computational power (e.g., mobiles).

The approach of modeling the interaction between attackers and defenders using game theoretic tools is well established
~\cite{manshaei2013game}. A series of papers on security games deals with several variants of such problems. These include the ARMOR project for security at airports~\cite{pita2008deployed}, resilience to cyberphysical control systems~\cite{zhu2015game}, and information security~\cite{fielder2014game}.

\section{Model} \label{sec::model}
We assume the attacker wishes to separate the defender node from all other honest nodes. We model this as a game between the defender and the attacker. The first model that we examine assumes (unrealistically) that the defender knows all the nodes in the ``universe'' $V$ and is unrestricted by memory considerations. We later use this as a building block for the second model in which the defender has a limited knowledge and memory buffer.

It is important to note that our model was created to defend from attacks on a single node. This is not the general case, where the attacker may try to isolate a specific subset of nodes. In this paper we will approximate an honest node as ``safe'' simply if it is connected to other honest nodes in the local sense.

\subsection{Preliminaries}
Let $S^1,S^2$ be the strategy sets of the defender and the attacker, respectively. Below we define  $\gameval: S^1 \times S^2 \rightarrow \mathbb{R}$ to the defender's utility in the game (the attacker's utility is $-\gameval$, this is a 2-player zero-sum game).
Recall that strategy $s^1 \in S^1$ is called an \emph{L-safety-level strategy} iff $\forall s^2 \in S^2, \, \gameval(s^1, s^2) \geq L$ (a similar definition can be stated for player 2).

Denote V as the set of peers (either honest or those owned by the attacker) in the network.
Let $H$ be the number of connections the defender creates (as in Bitcoin implementation, we use a configurable constant).

We further assume that the attacker gains some value $\wattack$ from a successful attack against the defender (and that the defender suffers this as a loss).

\subsubsection{Cost Functions} \label{subsec::costs}
We assume the attacker can corrupt or \emph{acquire} a subset $A$ of nodes from the universe $V$ at a cost that we denote by $\dif(A)$, where $\dif : 2^V \to \mathbb{R}_{\ge 0}$.
Intuitively there are many choices for such functions (constant cost, costs that are relative to the round trip time and many more). In this paper, for simplicity, we mainly focus on a specific cost function, in which acquiring an IP range costs $c_{new}$ and every IP from that range costs $c_{node}$, assuming $\dif(A) = c_{new} \cdot \text{number\_of\_masks\_in\_A} + c_{node} \cdot |A|$.

We start the discussion with the general cost function, where we prove some interesting theoretical theorems. After that, in section~\ref{sec::costs_by_ip_mask}, we use the IP range costs, conclude the main practical theorem, and evaluate the results.

\subsection{The Limitless-Buffer Game}
The 2-player zero-sum game between attacker and defender is defined as follows:
The defender's strategy space $S^1$ contains possible sets of nodes that he may connect to: $S^1 = \{A \subset V : |A|=H\}$.
The attacker's strategy space $S^2$ contains subsets of the universe $V$ that he chooses to corrupt: $S^2 = 2^V$.
The attack is considered successful iff the attacker owns all the nodes that the defender selected. The utility function is thus $\gameval(A, B) =
	\begin{cases}
	\dif(B) & \text{ if } A \not \subset B \\
	\dif(B) - \wattack & \text{ if } A \subset B
	\end{cases}$

We will usually consider the game with mixed-strategy $\sigma^1 \in \Delta_{S^1}, \sigma^2 \in \Delta_{S^2}$:
	$$\gameval(\sigma^1, \sigma^2) = \sum_{B \in S^2}{\sigma^2_{B} \dif(B)} - \wattack\sum_{B \in S^2, A \in S^1}{\sigma^1_{A} \sigma^2_{B} \delta_{A \subset B}} $$  in the above $\sigma^k_U$ is the probability that player $k$ chooses the subset $U \in S^k$, and $\delta_{A \subset B}$ is 1 if $A \subset B$, 0 otherwise.

The utility function is in fact the expected money that the attacker will \textit{pay} if he tries to attack. Therefore, the defender tries to gain a positive utility, because this protects him from being attacked. The attacker tries to get negative utility because this is the total value that he will earn.

\subsection{The Restricted-Buffer Scenario} \label{restricted_buffer_subsection}
We wish to defend P2P networks in realistic settings so we must take buffers and limited knowledge of the world into account. We thus define the following refinement of the game: Assume that at any point in time the defender can only maintain a set of potential peers in his buffer. Let $\mathcal{B}$ denote the buffer size (the number of nodes for which information can be saved). The agent receives a stream of announcements about nodes from which it selects ones to store in the buffer. We assume honest nodes are advertised at least once in each time period of length $\mathcal{T}$ (attacker nodes may be updated more frequently, as is often the case during eclipse attacks). If a node's details are stored in the buffer, and the buffer is full, a different stored record must be evicted first. The node then chooses its connections from the set of stored nodes. 

Each node should specify an algorithm $\remfunc$ that implements the functionality that decides, based only on the cost to corrupt the nodes and unique identity (the only revealed information), which node records to save or evict from the buffer.
As the defender, we try to find $\remfunc$ that maximizes the minimum amount of resources that must be invested to successfully attack.

\section{Analysis with an Unrestricted Buffer}\label{sec::analysis_unrestricted_buffer}
In this section we show results for the setting in which the nodes in $V$ are known to all.
The following theorem shows that it is better for the defender to err on the side of over-estimating the damage from an attack:
\begin{theorem} \label{smaller_worths}
Let $\gameval_{\wattacks{i}}$ be the utility of the game in which a successful attack causes $\wattacks{i}$ damage. If strategy $\sigma^1$, is a l-safety-level in the game with $\wattacks{1}$ and $W_2 \le W_1$, then $\sigma^1$ is a l-safety level in $\gameval_{\wattacks{2}}$, and generally:
$$\forall \bar{\sigma}_2 \quad \gameval_{W_2}(\sigma^1, \bar{\sigma}^2) \ge \gameval_{W_1}(\sigma^1, \bar{\sigma}^2) \ge l$$
\end{theorem}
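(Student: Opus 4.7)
The plan is to exploit the additive structure of $\gameval$ in $\wattack$: the utility splits into a term that captures the attacker's expenditure (independent of $W$) and a term that subtracts $W$ times the probability of successful containment. Decreasing $W$ only scales down the subtracted term, and that term is manifestly non-negative since it is a weighted average (over $\sigma^1$ and $\bar{\sigma}^2$) of the indicator $\delta_{A \subset B}$. Thus the defender's utility can only go up as $W$ goes down, for any fixed attacker mixed strategy.

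Concretely, I would first write down the difference
$$\gameval_{W_2}(\sigma^1, \bar{\sigma}^2) - \gameval_{W_1}(\sigma^1, \bar{\sigma}^2) = (W_1 - W_2)\sum_{B \in S^2, A \in S^1} \sigma^1_A \, \bar{\sigma}^2_B \, \delta_{A \subset B},$$
observing that the $\sum_B \bar{\sigma}^2_B \dif(B)$ terms cancel exactly because the cost function $\dif$ does not depend on $W$. Then I would note that $W_1 - W_2 \ge 0$ by hypothesis and that the remaining sum is non-negative (it is in fact the probability that the attack succeeds under the product distribution $\sigma^1 \otimes \bar{\sigma}^2$). This yields the first inequality $\gameval_{W_2}(\sigma^1, \bar{\sigma}^2) \ge \gameval_{W_1}(\sigma^1, \bar{\sigma}^2)$.

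The second inequality $\gameval_{W_1}(\sigma^1, \bar{\sigma}^2) \ge l$ is then just the hypothesis that $\sigma^1$ is an $l$-safety-level strategy in the $W_1$-game, applied to the arbitrary attacker mixed strategy $\bar{\sigma}^2$. Chaining the two bounds gives the displayed conclusion, and in particular shows $\sigma^1$ is also $l$-safety-level under $W_2$.

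There is no real obstacle here; the only thing to be careful about is that the claim quantifies over \emph{all} attacker strategies $\bar{\sigma}^2$ (pure or mixed), so I would make sure to keep $\bar{\sigma}^2$ arbitrary throughout rather than optimizing it. It is also worth noting in passing why the conclusion is intuitive: a defender who hedges against a more damaging attack is a fortiori prepared for a less damaging one, since the attacker's incentive to attack at all weakens as $W$ decreases while attack costs remain the same.
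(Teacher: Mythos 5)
Your proposal is correct and follows essentially the same argument as the paper: both rely on the fact that $W$ multiplies the non-negative term $\sum_{A,B}\sigma^1_A\bar{\sigma}^2_B\delta_{A\subset B}$ while the cost term is unaffected, so lowering $W$ can only raise the defender's utility, and the safety-level bound then follows from the hypothesis. Writing it as a difference rather than a chained inequality is a purely cosmetic variation.
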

\begin{proof}
	$\gameval_{\wattacks{2}}(\sigma^1, \bar{\sigma}^2) = 
	\sum_{B \in S^2}{\bar{\sigma}^2_B \dif(B)} - \wattacks{2}\sum_{B \in S^2, A \in S^1}{\sigma^1_A \bar{\sigma}^2_B \delta_{A \subset B}} \ge$
	$$ \sum_{B \in S^2}{\bar{\sigma}^2_B \dif(B)} - \wattacks{1}\sum_{B \in S^2, A \in S^1}{\sigma^1_A \bar{\sigma}^2_B \delta_{A \subset B}}  = 
	\gameval_{W_1}(\sigma^1, \bar{\sigma}^2) \ge l $$
	The first step is using the definition, the second is $W_2 \le W_1$, and the last step is because $\sigma^1$ is a l-safety-level in $\gameval_{\wattacks{1}}$.
\end{proof}

Next, we observe the attacker's strategies in Nash equilibria. We show that an attack may come in one of two forms: either the attacker corrupts no nodes at all or he places some small probability to ``cover'' every node.
\begin{lemma} \label{attacker_play_all}
	For any \nash $(\sigma^1, \sigma^2)$ in the game it holds that either $\forall B \quad \sigma^2_B = 0$, or alternatively, $\forall A \in S^1~ \exists B$ s.t. $A \subseteq B$ and $\sigma^2_B \ne 0$.
\end{lemma}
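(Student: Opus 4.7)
The plan is to prove the contrapositive: assume there exists $A^* \in S^1$ such that $A^* \not\subseteq B$ for every $B$ with $\sigma^2_B > 0$, and deduce that $\sigma^2$ places all its mass on sets of cost $0$ (which, for the cost functions of Section~\ref{subsec::costs}, forces concentration on $\emptyset$).

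The first step is a best-response calculation for the defender. For any pure $A \in S^1$,
$$\gameval(A, \sigma^2) = \sum_B \sigma^2_B \dif(B) - \wattack \sum_B \sigma^2_B \delta_{A \subset B}.$$
The subtracted term is non-negative and, by hypothesis, vanishes at $A = A^*$, so $A^*$ is a best response with value $\sum_B \sigma^2_B \dif(B)$. In a Nash equilibrium, every $A$ in the support of $\sigma^1$ must also achieve this value, which forces $\sum_B \sigma^2_B \delta_{A \subset B} = 0$ for each such $A$. Equivalently, no $A$ in the support of $\sigma^1$ is contained in any $B$ in the support of $\sigma^2$.

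The second step swaps perspective to the attacker. The attacker's payoff from a pure $B \in S^2$ equals $\wattack \sum_A \sigma^1_A \delta_{A \subset B} - \dif(B)$, which by the previous step collapses to $-\dif(B) \le 0$ for any $B$ in the support of $\sigma^2$. But playing $\emptyset \in S^2$ yields payoff $0$, so the best-response condition forces $\dif(B) = 0$ for every $B$ in the support. For the cost functions considered (with $c_{new}, c_{node} > 0$, any nonempty set has strictly positive cost), this concentrates $\sigma^2$ on $\emptyset$, matching the first alternative of the lemma.

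The only subtle point is interpretive rather than technical: a probability distribution cannot literally satisfy $\sigma^2_B = 0$ for every $B$, so I would read the first alternative as ``$\sigma^2$ is concentrated on zero-cost sets'' (equivalently, on $\emptyset$ for the cost functions used in the paper), and state this reading explicitly. Once that reading is adopted, the argument is a two-step best-response chase using only the existence of the cheap ``do nothing'' strategy $\emptyset$ for the attacker; no fixed-point or existence machinery beyond the assumed Nash property is needed.
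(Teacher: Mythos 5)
Your proof is correct and follows essentially the same route as the paper's: both argue from the Nash best-response conditions that an uncovered $A^*$ guarantees the defender at least $0$ while the attacker's option to corrupt nothing caps the value at $0$, forcing $\sigma^2$ to place all its mass on zero-cost sets. Your version is in fact slightly more careful than the paper's, which stops at $\gameval(S^1_{A},\sigma^2)=0$ and leaves the final step (and the interpretive reading of ``$\forall B\ \sigma^2_B=0$'' as concentration on $\emptyset$) implicit.
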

\begin{proof}
	Mark the defender's strategy to play pure $A \subset V$ with $S^1_A$.
	Let us now assume that exists $A \in S^1$ s.t. for all $B$ with $A \subseteq B$ it holds that $\sigma^2_B = 0$, then the attacker never answers the defender's strategy $S^1_A$, therefore $\gameval(S^1_A, \sigma^2) \ge 0$. But it follows by the \nash definition that $\gameval(\sigma^1, \sigma^2) \ge \gameval(S^1_A, \sigma^2)$.
	Clearly $0 = \gameval(\sigma^1, 0) \ge \gameval(S^1_A, \sigma^2)$ therefore $\gameval(\sigma^1_A, \sigma^2) = 0$.
\end{proof}

We now show that in any \nash, the defender places more probability on selecting more expensive sets of nodes (from the attacker's support).
\begin{lemma} \label{difficulty_lemma} Let $(\sigma^1, \sigma^2)$ be a \nash. Then $\forall {B_1},{B_2} \in supp(\sigma^2)$ it holds that $\dif({B_1}) \le \dif({B_2})$ if and only if $\sum_{A \subseteq {B_1} \cap S^1} \sigma^1_A \le \sum_{A \subseteq {B_2} \cap S^1} \sigma^1_A$
\end{lemma}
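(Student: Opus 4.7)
The plan is to invoke the standard indifference condition for a Nash equilibrium on the attacker's side and read the claim off directly. Introduce the shorthand $p(B) := \sum_{A \in S^1, A \subseteq B} \sigma^1_A$, which is exactly the probability that a defender drawn from $\sigma^1$ picks a set contained in $B$. Then the attacker's expected payoff from playing the pure strategy $B$ against $\sigma^1$ (remember the attacker's utility is $-\gameval$) is
\[
-\gameval(\sigma^1, B) \;=\; \wattack \cdot p(B) - \dif(B).
\]

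The first step is to observe that in any Nash equilibrium every pure strategy in the attacker's support must yield the same expected payoff; otherwise the attacker could strictly improve by shifting mass from a lower-payoff $B$ to a higher-payoff one, contradicting best response. Hence for any $B_1, B_2 \in \mathrm{supp}(\sigma^2)$,
\[
\wattack \cdot p(B_1) - \dif(B_1) \;=\; \wattack \cdot p(B_2) - \dif(B_2),
\]
which rearranges to $\dif(B_2) - \dif(B_1) = \wattack \cdot \bigl(p(B_2) - p(B_1)\bigr)$.

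The second step is to use that $\wattack > 0$ (attacks are valuable; otherwise the problem is vacuous and one could dispense with the lemma). Dividing the equality by $\wattack$ preserves sign, so $\dif(B_1) \le \dif(B_2)$ holds if and only if $p(B_1) \le p(B_2)$, which is precisely the biconditional in the statement.

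The only subtlety — and really the only ``obstacle'' — is the indifference step: one has to note that if some $B \in \mathrm{supp}(\sigma^2)$ gave a strictly worse payoff than some other $B'$ (in the attacker's support or not), moving $\sigma^2$'s mass from $B$ to $B'$ would strictly lower $\gameval(\sigma^1, \sigma^2)$, violating the Nash best-response property of $\sigma^2$. This is standard, but worth stating explicitly since the lemma restricts to pairs inside $\mathrm{supp}(\sigma^2)$, for which the indifference is exactly what is needed. The case $\mathrm{supp}(\sigma^2) = \emptyset$ (covered by Lemma~\ref{attacker_play_all}) makes the claim vacuous.
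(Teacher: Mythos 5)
Your proof is correct and follows essentially the same route as the paper's: both invoke the indifference condition that every pure strategy in the attacker's support yields the same payoff against $\sigma^1$, equate the two resulting expressions, and divide by $\wattack > 0$ to obtain the biconditional. Your explicit justification of the indifference step and the remark that $\wattack>0$ is needed are welcome but do not change the substance.
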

\begin{proof}
	Reminder: If v is the value of the game A and $\left(\sigma^1, \sigma^2\right)$ is an equilibrium, then any pure strategy from the support can be used to achieve it. Formally, if $\sigma^2_B \ne 0$ (for some $B \subset V$), then $\gameval \left(\sigma^1, S^2_B\right) = v$. Therefore:\\
	$v = \gameval\left(\sigma^1, S^2_{B_1}\right) = \dif({B_1}) - \wattack \cdot \sum_{A \subset {B_1} \cap S^1} \sigma^1_A$ \\
	$v = \gameval\left(\sigma^1, S^2_{B_2}\right) = \dif({B_2}) - \wattack \cdot \sum_{A \subset {B_2} \cap S^1} \sigma^1_A$ \\
	The first step in each row is from ${B_1},{B_2} \in supp(\sigma^2)$, the second is from the definition of $\gameval$.
	Therefore
	$ \dif({B_1}) - \dif({B_2}) = \wattack \left( \sum_{A \subset {B_1} \cap S^1} \sigma^1_A - \sum_{A \subset {B_2} \cap S^1} \sigma^1_A \right) $.
	So indeed $\dif({B_1}) \le \dif({B_2}) \leftrightarrow \displaystyle{\sum_{A \subset {B_1} \cap S^1}} \sigma^1_A \le \displaystyle{\sum_{A \subset {B_2} \cap S^1}} \sigma^1_A$.
\end{proof}

The next corollary is obvious, but it contains an insight that we will use in the next sections:
\begin{corollary} \label{bigger_difficulty_corollary}
	The probability of choosing a group of peers is determined by the attacker's support.
	I.e., let $(\sigma^1, \sigma^2)$ be an equilibrium, then $\forall B_1,B_2 \in supp(\sigma^2)$, it holds that $\dif(B_1) = \dif(B_2)$ if and only if $\sum_{U \subseteq B_1 \cap S^1} \sigma^1_U = \sum_{U \subseteq B_2 \cap S^1} \sigma^1_U $
\end{corollary}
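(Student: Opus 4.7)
The corollary is essentially a direct specialization of Lemma~\ref{difficulty_lemma} to the equality case, so the plan is to reuse that lemma as a black box rather than redo the equilibrium analysis. First I would note that $\dif(B_1) = \dif(B_2)$ is equivalent to the conjunction $\dif(B_1) \le \dif(B_2)$ and $\dif(B_2) \le \dif(B_1)$, and similarly $\sum_{U \subseteq B_1 \cap S^1} \sigma^1_U = \sum_{U \subseteq B_2 \cap S^1} \sigma^1_U$ is equivalent to the two opposite weak inequalities holding simultaneously.

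Then I would apply Lemma~\ref{difficulty_lemma} once with the pair $(B_1, B_2)$ and once with the pair $(B_2, B_1)$; both pairs lie in $\mathrm{supp}(\sigma^2)$ by hypothesis, so the lemma applies in each case. The forward direction then gives: if $\dif(B_1) = \dif(B_2)$ then both $\sum_{U \subseteq B_1 \cap S^1} \sigma^1_U \le \sum_{U \subseteq B_2 \cap S^1} \sigma^1_U$ and its reverse, hence equality of the probability sums. The backward direction follows symmetrically.

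An alternative and arguably cleaner presentation would be to re-derive the key identity from the proof of Lemma~\ref{difficulty_lemma}, namely
\[
\dif(B_1) - \dif(B_2) \;=\; \wattack\left(\sum_{A \subseteq B_1 \cap S^1} \sigma^1_A \;-\; \sum_{A \subseteq B_2 \cap S^1} \sigma^1_A\right),
\]
which holds whenever $B_1, B_2 \in \mathrm{supp}(\sigma^2)$ by the indifference principle. Since $\wattack > 0$, the left-hand side vanishes iff the right-hand side does, which is exactly the corollary.

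There is no real obstacle here: the only thing to be careful about is to make sure both $B_1$ and $B_2$ are in the support of $\sigma^2$ (so that the indifference principle from the proof of Lemma~\ref{difficulty_lemma} applies to each), and to handle the edge case $\wattack = 0$ trivially, since then the utility function reduces to $\dif(B)$ and both sides of the equivalence become independent statements about the cost function.
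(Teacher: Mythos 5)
Your proof is correct and takes essentially the same approach as the paper, which states the corollary without proof as an ``obvious'' consequence of Lemma~\ref{difficulty_lemma}; applying that lemma to both ordered pairs $(B_1,B_2)$ and $(B_2,B_1)$ (or equivalently invoking the displayed identity with $\wattack>0$) is exactly the intended argument. One small quibble: your aside that the case $\wattack=0$ is handled ``trivially'' is not quite right --- in that degenerate case the equivalence (and indeed Lemma~\ref{difficulty_lemma} itself) can fail, but the paper implicitly assumes $\wattack>0$ throughout, so this does not affect the result.
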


\section{Buffer-Restricted Implementation} \label{sec::buffer_restricted_imp}

We now turn to the scenario where buffers are restricted.
Our goal here is to decrease the number of nodes that we should remember in the buffer, while achieving {\it the same game value}. We do so by changing the defender's strategies space using equivalence classes on $S^1$ and then prove that it preserves the \nash and the game value.

Define the equivalence relation $A_1~\sim~A_2$ on sets of nodes $A_1,A_2 \in S^1$ if $\forall (\sigma^1, \sigma^2)$ \nash in $\gameval$, holds that $\sigma^1_{A_1} = \sigma^1_{A_2}$. We will denote this equivalence class with $[A_1]$.
Using those equivalence classes as the defender's strategies space ($\hat{S^1}$) we can define a new game ($\hat{\gameval}$).

Intuitively, in this game, the defender does not distinguish between different connections' sets within the same equivalence class, so he chooses one uniformly. This definition can be also related to the cost to corrupt sets of nodes in any attacker's Nash strategy (as in Corollary~\ref{bigger_difficulty_corollary}).

Due to lack of space, we will not present here the full definition of $ \hat{\gameval}$ and the analogy between the two perspectives. In a nutshell, $\hat{\gameval}([A], B)$ is defined as choosing uniformly strategy $A_1 \in [A]$ and playing it in $\gameval(A_1, B)$. After that, we created a mapping function $T: \Delta_{S^1} \to \Delta_{\hat{S^1}}$ with $\big( T(\sigma) \big)_{[A]} = \sum_{b \in [A]} \sigma_b$ and $\big( T^{-1}(\hat{\sigma^1}) \big)_{A} = \frac{\hat{\sigma^1}_{[A]}}{|[A]|}$. We prove that for strategies $(\hat{\sigma^1}, \sigma^2)$ it holds that (i) $ \hat{\gameval}(\hat{\sigma^1}, \sigma^2) = \gameval(T^{-1}(\hat{\sigma^1}), \sigma^2)$, (ii) $T(T^{-1}(\hat{\sigma^1})) = \hat{\sigma^1}$, and (iii) if $(\sigma^1, \sigma^2)$ is \nash in $\gameval$, then $\forall \sigma'^2 \in S^2$ it holds that $\hat{\gameval}(T(\sigma^1), \sigma'^2) = \gameval(\sigma^1, \sigma'^2)$.

Finally, we conclude the main theorem:

\begin{theorem} \label{reduce_theorm}
	Nash equilibria in both games have the same game value, and  $(\sigma^1, \sigma^2)$ is a \nash in $\gameval$ iff $(T(\sigma^1), \sigma^2)$ is a \nash in $\hat{\gameval}$.
\end{theorem}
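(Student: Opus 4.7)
The plan is to combine the three facts (i)--(iii) summarized just above into two parallel deviation arguments, one for each direction of the ``iff'', plus a single invocation of (iii) to match the game values. I would first dispose of the equal-game-value claim: whenever $(\sigma^1, \sigma^2)$ is a \nash in $\gameval$, fact (iii) specialized to $\sigma'^2 = \sigma^2$ already gives $\hat{\gameval}(T(\sigma^1), \sigma^2) = \gameval(\sigma^1, \sigma^2)$, so the two \nash values must coincide and only the deviation inequalities remain to be verified on each side.

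For the forward direction, I would handle attacker and defender deviations separately. Attacker deviations in $\hat{\gameval}$ range over exactly the same set $\Delta_{S^2}$ as in $\gameval$, so (iii) rewrites every $\hat{\gameval}(T(\sigma^1), \tau^2)$ as $\gameval(\sigma^1, \tau^2)$ and the attacker optimality inequality transfers verbatim. For defender deviations, I would pull any $\hat{\tau} \in \Delta_{\hat{S^1}}$ back to $T^{-1}(\hat{\tau}) \in \Delta_{S^1}$ using (i), so that $\hat{\gameval}(\hat{\tau}, \sigma^2) = \gameval(T^{-1}(\hat{\tau}), \sigma^2)$. The defender's Nash optimality in $\gameval$ then upper-bounds this by $\gameval(\sigma^1, \sigma^2)$, and one more application of (iii) turns the latter into $\hat{\gameval}(T(\sigma^1), \sigma^2)$, giving the required inequality.

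The reverse direction is where I expect the main difficulty, and it is the part I would write most carefully. Starting from $(T(\sigma^1), \sigma^2)$ a \nash in $\hat{\gameval}$, the natural candidate in $\gameval$ is the pull-back $T^{-1}(T(\sigma^1))$, and (i) together with (ii) let the two Nash inequalities transfer by the same lifting/rewriting pattern as in the forward step (attacker deviations move through (i) directly; defender deviations $\tau^1$ push forward to $T(\tau^1)$ and use (i) together with the Nash inequality in $\hat{\gameval}$). The delicate step is the final identification $\sigma^1 = T^{-1}(T(\sigma^1))$, which fails for an arbitrary mixed $\sigma^1$ but is forced on the Nash locus because the equivalence $\sim$ is defined precisely so that any \nash defender strategy assigns equal mass to equivalent pure strategies. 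I would spell this out either by restricting attention up front to $\sigma^1$ that are uniform on equivalence classes (no loss of generality for the ``iff'', since the right-hand side only sees $T(\sigma^1)$) or by inserting a short auxiliary lemma showing $\gameval(\sigma^1, B) = \gameval(T^{-1}(T(\sigma^1)), B)$ for every pure attacker strategy $B$, thereby making the symmetrization step explicit and closing the proof without relying on any hidden assumption.
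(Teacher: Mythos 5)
Your forward direction and the equal-value claim reproduce the paper's argument essentially verbatim: the same two chains of (in)equalities, using (i) to pull defender deviations back through $T^{-1}$, (iii) to transfer attacker deviations, and (iii) once more at $\sigma'^2=\sigma^2$ for the value. Where you diverge is the reverse direction, and there you have gone beyond the paper: its proof establishes only that a \nash in $\gameval$ maps to one in $\hat{\gameval}$ and then closes with the sentence ``Then we prove that this is a \nash,'' which does no mathematical work --- the ``only if'' half of the theorem is never argued. Your diagnosis of why that half is delicate is exactly right: $T$ is many-to-one, fact (iii) cannot be invoked without circularity because its hypothesis is precisely that $(\sigma^1,\sigma^2)$ is a \nash in $\gameval$, and the object you can actually control via (i) and (ii) is $T^{-1}(T(\sigma^1))$, not $\sigma^1$ itself.

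One caveat on your proposed patch. The auxiliary lemma $\gameval(\sigma^1,B)=\gameval(T^{-1}(T(\sigma^1)),B)$ for every pure $B$ amounts to payoff-equivalence of $\sim$-equivalent pure strategies against arbitrary attacker play, but the paper defines $A_1\sim A_2$ only by ``equal probability in every \nash,'' which is weaker: equilibrium indifference yields equal payoffs only against the equilibrium $\sigma^2$, not against every $B$. So that lemma does not follow from the stated definitions, and your first option --- restricting to $\sigma^1$ uniform on equivalence classes, i.e.\ to the image of $T^{-1}$ --- quietly weakens the ``iff'' to that subclass rather than being without loss of generality (a non-uniform $\sigma^1$ can have $T(\sigma^1)$ equal to a Nash strategy of $\hat{\gameval}$ while $(\sigma^1,\sigma^2)$ fails to be a \nash in $\gameval$). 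This residual gap is inherited from the paper's setup, not introduced by you, but you should state the restriction explicitly instead of presenting it as harmless.
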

\begin{proof}
	We first show that the $T$ function saves the property of \nash.
	Let $(\sigma^1, \sigma^2)$ be a \nash in $\gameval$, we want to prove that $(T(\sigma^1), \sigma^2)$ is \nash in $\hat{\gameval}$. 
	On the one hand, $\forall \sigma'^1 \in \hat{S^1}$, 
	$$ \hat{\gameval}(\sigma'^1, \sigma^2) = \gameval(T^{-1}(\sigma'^1), \sigma^2) \le \gameval(\sigma^1, \sigma^2) = \hat{\gameval}(T(\sigma^1), \sigma^2) $$
	where the first equality is (i), the second is the definition of \nash in $\gameval$, and the last is (iii).
	On the other hand, $\forall \sigma'^2 \in S^2$
    $$ \hat{\gameval}(T(\sigma^1), \sigma'^2) = \gameval(\sigma^1, \sigma'^2)  \ge \gameval(\sigma^1, \sigma^2) = \hat{\gameval}(T(\sigma^1), \sigma^2) $$
	where the first and last equality are (iii), and the second is the definition of \nash.
	
	Then we prove that this is a \nash. And clearly, they have the same value due to (iii): $\gameval(\sigma^1, \sigma^2) = \hat{\gameval}(T(\sigma^1), \sigma^2)$
\end{proof}

\begin{corollary} \label{equivalnece_safety_level_corollary}
	Defender's safety level l in $\hat{\gameval}$ can be translated to safety level $\ge l$ in $\gameval$.
\end{corollary}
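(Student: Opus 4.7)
The plan is to push any defender strategy $\hat{\sigma^1}$ that achieves safety level $l$ in $\hat{\gameval}$ back to $\gameval$ through the already-constructed inverse map $T^{-1}: \Delta_{\hat{S^1}} \to \Delta_{S^1}$. Concretely, given an $l$-safety-level $\hat{\sigma^1}$ for the defender in $\hat{\gameval}$ (so $\hat{\gameval}(\hat{\sigma^1}, \sigma^2) \ge l$ for every $\sigma^2 \in S^2$), I would set $\sigma^1 := T^{-1}(\hat{\sigma^1})$ and verify that this is the desired strategy in the original game.

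The entire verification reduces to a one-line application of property (i) stated just before Theorem~\ref{reduce_theorm}. For an arbitrary attacker strategy $\sigma^2 \in S^2$,
$$\gameval(\sigma^1, \sigma^2) \;=\; \gameval\bigl(T^{-1}(\hat{\sigma^1}), \sigma^2\bigr) \;=\; \hat{\gameval}(\hat{\sigma^1}, \sigma^2) \;\ge\; l,$$
where the middle equality is property (i) and the final inequality is the safety-level hypothesis on $\hat{\sigma^1}$. Since $\sigma^2$ was arbitrary, $\sigma^1$ is an $l$-safety-level strategy in $\gameval$, which is in particular $\ge l$, as claimed.

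There is no real obstacle: all the heavy lifting was done in constructing $T$, $T^{-1}$ and proving properties (i)--(iii). The only minor point to check is that $T^{-1}(\hat{\sigma^1})$ really lies in $\Delta_{S^1}$, i.e.\ that the formula $\bigl(T^{-1}(\hat{\sigma^1})\bigr)_A = \hat{\sigma^1}_{[A]}/|[A]|$ yields nonnegative weights summing to one; this is immediate from the fact that the equivalence classes $[A]$ partition $S^1$, so $\sum_{A \in S^1} \hat{\sigma^1}_{[A]}/|[A]| = \sum_{[A]} \hat{\sigma^1}_{[A]} = 1$. With that observation the corollary follows essentially for free from Theorem~\ref{reduce_theorm} and its preparatory properties.
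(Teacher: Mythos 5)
Your proof is correct and follows exactly the paper's own argument: both apply $T^{-1}$ to the safety-level strategy and invoke property (i) to transfer the bound $\ge l$ to $\gameval$. The extra check that $T^{-1}(\hat{\sigma^1})$ is a valid distribution is a small, harmless addition.
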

\begin{proof}
	Let $\hat{\sigma}^1 \in \hat{S^1}$ be a safety level l in $\hat{\gameval}$. Indeed, $\forall \sigma^2 \in S^2$, $\gameval(T^{-1}(\hat{\sigma^1}), \sigma^2) = \hat{\gameval}(\hat{\sigma^1}, \sigma^2) \ge l$
	where the equality on the left is (i), and the inequality on the right is the definition of safety level l in $\hat{\gameval}$.
\end{proof}

\boldheader{Equivalent peers}
{\ }For reasons that will become clear below, define a new equivalence relation on $V$: Two nodes $u,v$ are equivalent if for any set of other nodes $A$, adding $u$ to $A$ results in a strategically equivalent set to $A \cup \{v\}$. Formally: $\forall v,u \in V$, mark that $u \rightleftharpoons v$ if $\forall A \subset V, |A| = H-1$ it holds that $A \cup \{u\} \sim A \cup \{v\}$. Let $[v]_\rightleftharpoons$ denote the equivalence class of node $v \in V$, and the equivalence classes space with $\hat{V} := \{ [v]_\rightleftharpoons | v \in V \}$. 

Note that $[\{v_1, \cdots, v_H\}] \ne [v_1]_\rightleftharpoons \times \cdots \times [v_H]_\rightleftharpoons$ and generally there is no containment in either direction
(see Lemma \ref{equivalence_lemma_masks} for a specific interesting property).

\subsection{The Buffer Management Algorithm}
In this section we propose a concrete way of implementing the buffer management algorithm $\remfunc$ (as described in subsection~\ref{restricted_buffer_subsection}), that utilizes a Bloom filter. 

A Bloom filter~\cite{bloom1970space} is a data structure that uses hash-coded information to encode a set of items. It allows for some small fraction of errors in membership tests (false positives). The error rate can be lowered by increasing memory usage. In our paper, we use this method to avoid nodes that we already saw including those evicted from the buffer. We use the Bloom filter's deterministic answer to completely avoid known nodes, and accept a small fraction of false-positive answers on new nodes.

The next theorem discusses the benefits of well-implemented buffer management algorithms, and the following lemma gives a possible implementation:	

\begin{theorem} \label{buffered_game_value}
	Assume that we have some buffer management algorithm $\remfunc$ that satisfies the conditions from subsection \ref{restricted_buffer_subsection}, and in addition it has the property that choosing uniformly a node from $[A]$ inside the limited buffer has the same probability as choosing it uniformly from $[A]$ inside the entire inputs. \par
	Then we can implement any Nash strategy or safety level on a restricted buffer of size $O(H \cdot |\hat{S^1}|)$ with the same value as the game on a limitless buffer.
\end{theorem}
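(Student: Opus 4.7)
\smallskip
\noindent\textbf{Proof plan.} The plan is to reduce the restricted-buffer implementation problem to the equivalence-class game $\hat{\gameval}$ studied in Theorem~\ref{reduce_theorm}, and then argue that the second assumption on $\remfunc$ lets us simulate the required mixed strategy using only a bounded number of representatives per equivalence class. Concretely, suppose $\sigma^1$ is a Nash (or safety-level-$l$) strategy in the limitless-buffer game $\gameval$. First I would pass to its image $\hat{\sigma}^1 := T(\sigma^1) \in \Delta_{\hat{S^1}}$ in $\hat{\gameval}$; by property (iii) recalled just before Theorem~\ref{reduce_theorm} we have $\hat{\gameval}(T(\sigma^1),\sigma'^2) = \gameval(\sigma^1,\sigma'^2)$ for every attacker response, so $\hat{\sigma}^1$ achieves exactly the same value against every $\sigma^2$.

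Next I would describe the defender's buffer-based implementation. The support $\mathrm{supp}(\hat{\sigma}^1) \subseteq \hat{S^1}$ has cardinality at most $|\hat{S^1}|$, and each class $[A]$ in this support is a class of $H$-element subsets of $V$. To draw a pure action the defender (i) samples a class $[A]$ according to $\hat{\sigma}^1$, and (ii) uses $\remfunc$ to pick uniformly a representative $A' \in [A]$ from among the elements of $[A]$ currently stored in the buffer. By the second hypothesis of the theorem, the distribution produced by step (ii) coincides with the uniform distribution on $[A]$ over the whole input stream, which is precisely the conditional distribution induced by $T^{-1}(\hat{\sigma}^1)$ on $[A]$. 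Combining with property (i) from before Theorem~\ref{reduce_theorm}, this means the defender's realized mixed strategy equals $T^{-1}(\hat{\sigma}^1)$ and therefore yields value $\gameval(\sigma^1,\sigma'^2)$ against any attacker response, i.e.\ the same game value as in the limitless setting. The same computation applies verbatim if we only assume $\sigma^1$ is an $l$-safety-level strategy, using Corollary~\ref{equivalnece_safety_level_corollary}.

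It remains to bound the buffer size. To enable step (ii) for every $[A] \in \mathrm{supp}(\hat{\sigma}^1)$, the buffer must simultaneously contain at least one full representative of each such class, and the uniformity hypothesis on $\remfunc$ requires that for any class $[A]$ touched during the run, enough representatives are present to match the global uniform distribution; by the definition of $\sim$ on $V$, this reduces to keeping $H$ nodes per class (and only $H$, because any $H$ equivalent nodes behave identically up to relabeling inside $[A]$). Since there are at most $|\hat{S^1}|$ classes in the support, summing gives the claimed bound $O(H \cdot |\hat{S^1}|)$ on the buffer size.

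The main obstacle I anticipate is not the counting argument but the justification that an algorithm $\remfunc$ satisfying the second hypothesis actually exists and can be realized with the Bloom-filter-based mechanism described in this subsection, so that the probabilistic statement in step (ii) is valid under the streaming/eviction model of subsection~\ref{restricted_buffer_subsection}; the Bloom filter is what lets us track which nodes have already been considered so that the uniform sampling inside each equivalence class is preserved as the buffer turns over, and verifying this carefully is the delicate part of the argument.
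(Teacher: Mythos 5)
Your proposal is correct and follows essentially the same route as the paper's proof: store $H$ representatives per equivalence class, use the uniformity hypothesis to argue that sampling a representative of $[A]$ from the buffer reproduces uniform sampling over the whole input (hence realizes $T^{-1}(\hat{\sigma}^1)$ and simulates any strategy in $\hat{\gameval}$), and then transfer the Nash/safety-level guarantee back to $\gameval$ via Theorem~\ref{reduce_theorm} and Corollary~\ref{equivalnece_safety_level_corollary}. Your closing worry about whether such an $\remfunc$ exists is legitimate but lies outside this theorem, which takes the algorithm as a hypothesis; the paper handles existence separately in Lemma~\ref{remfunc_implementing_bloom_filter}.
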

\begin{proof}
	Store $H$ nodes for any equivalence class in $\hat{S^1}$. 
	Using the described property, $\forall [A] \in \hat{S^1}$, choosing a connection set uniformly in $[A]$ from the set of nodes in the buffer is the same as choosing it uniformly from the entire input. Therefore, we can play any strategy $\hat{\sigma}^1$ in the game $\hat{\gameval}$ by uniformly selecting a group in the buffer that is in the chosen equivalence class.
	
	Therefore, we implement a choice that is equivalent to the defender's strategy space $\hat{S^1}$ in the game $\hat{\gameval}$.
	Let $\sigma^2$ be an attacker strategy. The value of the game that was played in this buffer-limited world is exactly $\hat{\gameval}(\hat{\sigma}^1, \sigma^2)$. Finally, if this is a \nash / safety-level strategy, then $\gameval(T^{-1}(\hat{\sigma}^1), \sigma^2)$ is a \nash / safety-level in $\gameval$ (Theorem~\ref{reduce_theorm}/Corollary~\ref{equivalnece_safety_level_corollary} resp.).
\end{proof}

Consider Algorithm~\ref{remfunc_imp} with:
$EG$ as the equivalence classes (``buckets'') in our game, $bucket\_size$ as the number of nodes that each bucket can hold, $B \in \mathbb{N}$ as the size in bytes of the Bloom filter, and $Prob(p)$ to be true with probability $p$ (if $p \ge 1$ it is always true).

\begin{algorithm}
	\caption{$\remfunc$ algorithm for EG} \label{remfunc_imp}
	Initialize: \\ 
		~~~BF := Bloom-filter buffer of size $B$. \\
		~~~$\forall$ bucket $\in$ EG: bucket\_history[bucket]=0 \\
		~~~$\forall$ bucket $\in$ EG: buckets[bucket]=$\emptyset$ \\
	\While{n := new input node}{
		\If{not BF.contains(n)} {
			b := n.bucket \\
			bucket\_history[b] ++ \\
			\If{Prob$(\frac{bucket\_size}{bucket\_history[b]})$} {
					\If{buckets[b].isFull} {
						buckets[b].uniformlyRemoveOne
					}
					buckets[b].add(n)
			}
			BF.add(n)
		}
	}
\end{algorithm}

\begin{lemma} \label{remfunc_implementing_bloom_filter}
	Algorithm~\ref{remfunc_imp} with $EG = \hat{S^1}$ and optimal filter\footnote{filter without false positive or false negative.} satisfies the conditions of Theorem \ref{buffered_game_value}.
\end{lemma}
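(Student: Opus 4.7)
The plan is to split the verification into the two hypotheses of Theorem~\ref{buffered_game_value}. First, I would check that Algorithm~\ref{remfunc_imp} is a legitimate $\remfunc$ as defined in subsection~\ref{restricted_buffer_subsection}: by direct inspection of the pseudocode, every decision (bucket assignment, admission, and eviction) depends only on the cost-induced bucket of a node and on its identity through the Bloom filter, so this part is routine. The substantive hypothesis is the second: for every class $[A] \in \hat{S^1}$, a uniformly random pick from the buffer entries currently classified in $[A]$ must have the same distribution as a uniformly random pick over \emph{all} inputs ever routed to $[A]$.

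For that second part, my first move is to collapse duplicates. Because the Bloom filter is optimal---no false positives and no false negatives---the body of the while loop fires at most once per distinct node, so I may treat the incoming announcements as a stream of distinct nodes. The second move is that state updates and decisions for one bucket depend only on arrivals assigned to that bucket, so the analysis decouples across buckets and I may work with a single bucket in isolation.

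Restricted to one bucket $b$ with capacity $k := \text{bucket\_size}$, the pseudocode is exactly Vitter's reservoir-sampling algorithm: the $n$-th distinct arrival is admitted with probability $k/n$ (the condition $k/n \ge 1$ forces admission while $n \le k$), and on admission a uniformly chosen incumbent is evicted. By induction on $n$ I would establish the invariant that, conditioned on the input sequence, the contents of bucket $b$ form a uniformly random $\min(k,n)$-subset of the first $n$ arrivals to $b$. Sampling uniformly from the reservoir is then distributionally identical to sampling uniformly from all arrivals to $b$; applying this bucket-by-bucket and then lifting to classes $[A] \in \hat{S^1}$ yields the property required by Theorem~\ref{buffered_game_value}.

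The main obstacle I expect is the induction step in the reservoir-sampling argument: one has to verify that an incumbent survives the transition from $n$ to $n+1$ arrivals with probability $k/(n+1)$, matching the admission probability for the new arrival, so that the uniform-subset invariant is preserved across the combined admit-and-evict rule. This is a short calculation, and once it is in place the induction closes and the remaining pieces---duplicate suppression by the Bloom filter and the independence of buckets---have already been handled above.
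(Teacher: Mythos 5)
Your proposal is correct and follows essentially the same route as the paper: the paper's own proof is exactly the reservoir-sampling calculation, showing each node $v_j$ ends up in its bucket with marginal probability $\frac{bucket\_size}{l}$ via the telescoping product of survival probabilities, and then dividing by the bucket size to get the uniform $\frac{1}{l}$. Your framing via the uniformly-random-$k$-subset invariant (plus the explicit deduplication and per-bucket decoupling steps, which the paper leaves implicit) is just a slightly more structured presentation of the same argument.
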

\begin{proof}
For any equivalence class in the defender's strategy space (``bucket"), we should prove the following claim: If we choose a node uniformly from the bucket that we stored in the buffer, then it has the same probability as choosing it uniformly from all the nodes that the defender is notified about and belong to this equivalence class (i.e., the case where the bucket size is infinite). Assume that we make the choice after being notified of $l$ nodes in this bucket. We need to prove that there is a chance of $\frac{1}{l}$ to choose any node. 
Indeed, $\forall j \in \{1, \cdots, l\}$, we insert it into the buffer with probability $\frac{bucket\_size}{j}$ and after the next input it is still there with probability $(1 - \frac{bucket\_size}{j+1} \frac{1}{bucket\_size}) = \frac{j}{j+1}$, so after the l'th input: $\frac{j}{j+1} \frac{j+1}{j+2} \cdots \frac{l-1}{l} = \frac{j}{l}$, therefore the total probability of the node $v_j$ to be in the bucket after l inputs is simply $\frac{bucket\_size}{j} \cdot \frac{j}{l} = \frac{bucket\_size}{l}$.
Finally, the probability of choosing any node from the bucket is $\frac{bucket\_size}{l} \cdot \frac{1}{bucket\_size} = \frac{1}{l}$ (we choose uniformly in the bucket), which is exactly the same probability as choosing it from all the $l$ input nodes.
\end{proof}

\subsubsection{Continuous Games: Refreshing the Buffer and Bloom Filter}
The above algorithm works for a single ``round", where we are a node in the network, establishing the known nodes list, and choosing connections for the next round. This is not enough for a continuous game, where nodes in the network come and go frequently, and we need to choose connections repeatedly, using the same buffer. To overcome this difficulty, we can direct the network protocol to propagate live nodes every $\mathcal{T}$ time units, and store two copies of the buffer and filter that reset alternately every $2\mathcal{T}$ time units. This method gives us the ability to remember IPs from a window of $\mathcal{T}$, which is the full available information on the network (as we've assumed honest nodes send their IP address to others at least once every $\mathcal{T}$).

Additionally, in more realistic scenarios in which churn is an issue, and honest nodes may be offline at times, we suggest using larger bucket sizes to preserve a sufficiently large set of alternative connections.

In the full version of the paper, we prove that even if some IPs in the bucket cannot be selected, e.g., if they are stale, selecting uniformly from the remaining nodes in the bucket is equivalent again to a uniform selection (from the whole input nodes).

\subsection{Constant Cost Function}
As a benchmark, and a simple example of the applications of the results we present, we consider the scenario in which there is a constant cost to every node. This is an interesting scenario because in this case the defender doesn't distinguish between any two nodes.

Using the notation of subsection~\ref{subsec::costs}, let $\dif(U) = c \cdot |U|$.
I.e., if the defender will choose the subset of nodes $A$ and the attacker corrupts $B$, then the utility of the game is $c \cdot |B|-\wattack \cdot \delta_{A \subset B}$. 

Due to Lemma~\ref{difficulty_lemma} and the symmetry of the nodes in this example, in all the Nash equilibria in the game, the defender will choose any group of nodes $A \subset V, |A|=H$ with the same probability.
In this case, it is easy to see that there is a single defender's equivalence class. Using Algorithm \ref{remfunc_imp} with $|EG| = 1$ and Theorem~\ref{buffered_game_value}, gives the optimal game value.

For example, let us look at the network with nodes $V = v_1, \cdots, v_l$. For the defender, it is best to uniformly select nodes over all $V$. Using Algorithm \ref{remfunc_imp} with $|EG| = 1, bucket\_size=H$ gives us exactly what we need: $H$ nodes that are uniformly distributed over all the input (same proof as in Lemma \ref{remfunc_implementing_bloom_filter}).

\section{Costs by IP Masks} \label{sec::costs_by_ip_mask}
For the rest of the paper and the restricted-buffer case, we focus on the specific game where the cost function is determined by the IP subnet mask (as described in Subsection~\ref{subsec::costs}). A similar treatment applies to other cost functions.

\begin{lemma} \label{lemma_same_mask_same_prob}
	If $v_1, v_2$ are nodes in the same mask, then $[v_1]_\rightleftharpoons = [v_2]_\rightleftharpoons$. \\
	I.e., for all $A \subset V$ with $|A| = H-1$ and $\forall (\sigma^1, \sigma^2)$ \nash, it holds that $\sigma^1_{A \cup \{v_1\}} = \sigma^1_{A \cup \{v_2\}}$.
\end{lemma}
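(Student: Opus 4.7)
My plan is to exploit the fact that the IP-mask cost function treats $v_1$ and $v_2$ symmetrically. Let $\pi\colon V\to V$ be the transposition swapping $v_1$ and $v_2$ and fixing every other node; extend $\pi$ elementwise to subsets, and then to pure and mixed strategies of both players. Because $\dif(U)$ depends only on the number of distinct masks in $U$ and the cardinality $|U|$, and $\pi$ preserves both counts (this is exactly where we use that $v_1$ and $v_2$ lie in the same mask), we have $\dif(\pi U) = \dif(U)$ for every $U\subseteq V$.

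First I would verify that the utility is invariant under the simultaneous $\pi$-action: $\delta_{A\subseteq B} = \delta_{\pi A\subseteq \pi B}$ together with cost invariance give $\gameval(\pi A,\pi B) = \gameval(A,B)$ on pure strategies, and bilinearity extends this to $\gameval(\pi\sigma^1,\pi\sigma^2) = \gameval(\sigma^1,\sigma^2)$ for mixed strategies. Consequently the Nash set is closed under $\pi$, and since in a zero-sum game the set of equilibria is a product of convex sets, the symmetrization $\bar\sigma^i := \tfrac12(\sigma^i + \pi\sigma^i)$ is again a \nash and is $\pi$-invariant; this yields $\bar\sigma^1_{A\cup\{v_1\}} = \bar\sigma^1_{A\cup\{v_2\}}$ immediately.

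To upgrade the equality from the symmetrized Nash to \emph{every} Nash $(\sigma^1,\sigma^2)$, I would apply Corollary~\ref{bigger_difficulty_corollary} to pairs $B,\pi B$ in the attacker's support: cost invariance forces $\sum_{U\subseteq B\cap S^1}\sigma^1_U = \sum_{U\subseteq \pi B\cap S^1}\sigma^1_U$. Using $U\subseteq \pi B \iff \pi U\subseteq B$, this is an equality between $\sigma^1$ and $\pi\sigma^1$ aggregated over layers indexed by attacker sets. By varying $B$ over minimal supersets of $A\cup\{v_1\}$ (and of $A\cup\{v_2\}$) in $supp(\sigma^2)$, a standard inclusion--exclusion on the resulting linear system isolates the difference $\sigma^1_{A\cup\{v_1\}} - \sigma^1_{A\cup\{v_2\}}$ and shows it must vanish.

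I expect the last step to be the main obstacle. The convexity/averaging argument gives the symmetric conclusion very cheaply, but ruling out Nash equilibria that break the $v_1\leftrightarrow v_2$ symmetry requires the attacker's support to be rich enough to invert the aggregated equalities from Corollary~\ref{bigger_difficulty_corollary} into pointwise ones; arranging the combinatorics so that exactly the single-coordinate difference survives, rather than merely some aggregate over many sets containing $v_1$ or $v_2$, is the technical heart of the argument.
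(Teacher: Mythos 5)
The paper itself does not print a proof of this lemma (it is deferred to the full version), so I can only judge your argument on its own terms. Your first two steps are sound and are surely the intended core of any proof: the transposition $\pi$ swapping $v_1$ and $v_2$ preserves $\dif$ and the containment relation, hence $\gameval(\pi\sigma^1,\pi\sigma^2)=\gameval(\sigma^1,\sigma^2)$, the optimal-strategy sets of the zero-sum game are $\pi$-invariant and convex, and averaging yields a $\pi$-symmetric equilibrium. That cleanly proves that \emph{some} \nash satisfies $\sigma^1_{A\cup\{v_1\}}=\sigma^1_{A\cup\{v_2\}}$.

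The gap is exactly where you place it, and I do not think your proposed repair closes it. The lemma (via the definition of $\rightleftharpoons$ and $\sim$) quantifies over \emph{every} \nash, and your appeal to Corollary~\ref{bigger_difficulty_corollary} runs into three problems. First, that corollary only compares $B_1,B_2$ both in $supp(\sigma^2)$; for an arbitrary (possibly asymmetric) optimal attacker strategy there is no reason $\pi B$ lies in the support whenever $B$ does, so for such pairs the indifference principle yields only the one-sided inequality $\sum_{U\subseteq \pi B}\sigma^1_U \le \sum_{U\subseteq B}\sigma^1_U$. Second, even when both are in the support, the resulting constraints aggregate $\sigma^1$ over all $H$-subsets of $B$, and nothing guarantees the support is combinatorially rich enough for inclusion--exclusion to isolate the single difference $\sigma^1_{A\cup\{v_1\}}-\sigma^1_{A\cup\{v_2\}}$; Lemma~\ref{attacker_play_all} only guarantees each defender pure strategy is covered by \emph{some} support element, not that the needed family of supersets exists. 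Third, in the degenerate branch of Lemma~\ref{attacker_play_all} where the attacker's optimal strategy is to corrupt nothing (e.g.\ when $\wattack$ is below the cost of the cheapest covering attack), every defender strategy is a best response to the empty attack, so asymmetric equilibria exist and the pointwise claim is simply false as stated without further non-degeneracy assumptions. So your write-up is honest about where the difficulty lies, but the "all equilibria" version of the statement is not established by the proposal, and it is not clear it can be without either restricting to exchangeable/symmetric equilibria or adding hypotheses ruling out the degenerate case.
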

The proof is easy, and it can be found in the full version of the paper.

The following lemma shows that we can save representatives from each equivalence class. 
\begin{lemma} \label{equivalence_lemma_masks}
	For any defender's strategy	$A = \{v_1, \cdots, v_H\} \in S^1$, it holds that: $[v_1]_\rightleftharpoons \times \cdots \times [v_H]_\rightleftharpoons \subseteq [A]$
	I.e., we can replace any node with a node in the same mask, and still be in the same strategy equivalence class.
\end{lemma}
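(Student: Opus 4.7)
The plan is to prove the claim by iteratively swapping one node at a time, using the pairwise definition of $\rightleftharpoons$ together with the fact that $\sim$ is an equivalence relation (and hence transitive). Let $A' = \{v'_1, \ldots, v'_H\}$ be any element of $[v_1]_\rightleftharpoons \times \cdots \times [v_H]_\rightleftharpoons$; our goal is to show $A' \sim A$, i.e., that $\sigma^1_{A} = \sigma^1_{A'}$ in every Nash equilibrium $(\sigma^1, \sigma^2)$ of $\gameval$.

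First I would define the sequence of intermediate sets $A_0 = A$ and $A_k = \{v'_1, \ldots, v'_k, v_{k+1}, \ldots, v_H\}$ for $k = 1, \ldots, H$, so that $A_H = A'$. To move from $A_{k-1}$ to $A_k$ I would write $A_{k-1} = B_k \cup \{v_k\}$ and $A_k = B_k \cup \{v'_k\}$, where $B_k = \{v'_1, \ldots, v'_{k-1}, v_{k+1}, \ldots, v_H\}$ has exactly $H-1$ elements. Since $v_k \rightleftharpoons v'_k$ by hypothesis, the defining property of $\rightleftharpoons$ applied to $B_k$ immediately gives $A_{k-1} \sim A_k$. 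Chaining these $H$ equivalences via transitivity yields $A = A_0 \sim A_H = A'$, which is precisely the statement that $A' \in [A]$.

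The main subtlety I would flag and handle is that this argument is clean only when each intermediate $A_k$ is a genuine element of $S^1$, i.e., a subset of $V$ of size exactly $H$. If some $v'_k$ happens to already appear among the remaining $v_{k+1}, \ldots, v_H$ (or among the previously inserted $v'_1, \ldots, v'_{k-1}$), the "set" $B_k \cup \{v'_k\}$ collapses and the swap step does not directly apply. I would address this by restricting the statement (as is implicit in the formulation) to tuples in the Cartesian product whose coordinates give rise to sets of size $H$; alternatively, one can reorder the swaps so that each replacement introduces a node distinct from all current members of the working set, which is always possible since the target $A'$ itself has $H$ distinct nodes (one can swap first the coordinates whose new value is not already present, iterating until all coordinates are updated).

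I expect the interesting conceptual point rather than any calculation to be the main obstacle: the lemma exploits the fact that $\rightleftharpoons$ was defined pointwise (``one slot at a time''), whereas $\sim$ is defined globally on $S^1$, so the $H$-fold swap argument is really what links the two notions. Combined with Lemma~\ref{lemma_same_mask_same_prob}, this will justify, in the buffer-management algorithm that follows, storing only a few representatives from each mask-bucket rather than tracking individual IPs.
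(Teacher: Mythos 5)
Your proof is correct and follows essentially the same route as the paper's: replace one node at a time by a $\rightleftharpoons$-equivalent one (the single-swap invariance the paper draws from Lemma~\ref{lemma_same_mask_same_prob}) and chain the resulting $\sim$-equivalences by transitivity. Your explicit handling of degenerate tuples whose coordinates collide --- a point the paper's one-line proof silently ignores --- is a welcome extra level of care rather than a deviation.
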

\begin{proof}
Our cost function does not distinguish between two nodes in the same mask, so we may switch any node with another one in the same mask, and it will cost the same $\Longrightarrow$ Choosing them unequally will ease the attacker's game (according to Lemma \ref{lemma_same_mask_same_prob}) $\Longrightarrow$ We will choose them with the same probability.
\end{proof}

We derive our main practical result as a direct consequence:
\begin{theorem} \label{best_remfunc}
	We can implement the IP mask game on a limited buffer, with the same game value as the limitless-buffer game. 
\end{theorem}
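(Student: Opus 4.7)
The proof is a direct assembly of the machinery developed in Section~\ref{sec::buffer_restricted_imp}, specialized via Lemma~\ref{equivalence_lemma_masks}. The plan is to exhibit an explicit bounded-buffer implementation whose distribution over connection sets matches that of any Nash or safety-level strategy of the limitless-buffer game.

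First I would use Lemma~\ref{equivalence_lemma_masks} to index the defender's equivalence classes $\hat{S^1}$ by the multiset of IP masks that the $H$ chosen nodes come from. Because the cost function sees only mask-membership and cardinality, the number of distinct equivalence classes is at most the number of size-$H$ multisubsets of the set of masks, which is finite and typically tiny compared with $|V|$. This is the step that actually makes the buffer ``limited'': it caps $|\hat{S^1}|$ by a quantity that does not grow with the number of honest nodes in the universe, only with the number of masks present.

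Next I would instantiate Algorithm~\ref{remfunc_imp} with $EG$ equal to this mask-multiset indexing and set $bucket\_size = H$, so the total storage is $O(H \cdot |\hat{S^1}|)$. By Lemma~\ref{remfunc_implementing_bloom_filter}, a uniformly chosen node from any bucket of the resulting buffer has exactly the same distribution as a uniformly chosen node among all inputs that land in that equivalence class; thus the hypothesis of Theorem~\ref{buffered_game_value} is satisfied by the IP-mask equivalence structure.

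Finally I would invoke Theorem~\ref{buffered_game_value} to conclude that any Nash or safety-level strategy of the limitless game can be played on this bounded buffer while preserving the value. The one step that deserves a little care is the claim that, once a mask-multiset $[A]$ has been selected, drawing one representative per mask uniformly from the corresponding bucket yields a uniform sample from $[A]$ itself. This is exactly the content of the containment in Lemma~\ref{equivalence_lemma_masks} combined with Lemma~\ref{lemma_same_mask_same_prob}: nodes within a single mask are interchangeable in every Nash strategy, so the uniform distribution over $[v_1]_\rightleftharpoons \times \cdots \times [v_H]_\rightleftharpoons$ coincides with the marginal of any defender Nash strategy on $[A]$. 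I expect this interchangeability argument to be the main (but mild) obstacle; everything else is bookkeeping on top of Theorem~\ref{buffered_game_value}.
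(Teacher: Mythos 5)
Your overall plan (assemble Lemma~\ref{lemma_same_mask_same_prob}, Lemma~\ref{equivalence_lemma_masks}, Algorithm~\ref{remfunc_imp}, and Theorem~\ref{buffered_game_value}) is the paper's plan, and your closing paragraph about drawing one representative per mask is the right sampling procedure. But your concrete instantiation is not the construction the theorem is about, and it undermines the claim you are trying to prove. You set $EG$ to be the family of size-$H$ mask-multisets, i.e.\ you run the \emph{generic} construction of Theorem~\ref{buffered_game_value} with $EG=\hat{S^1}$ and storage $O(H\cdot|\hat{S^1}|)$. The entire point of Lemma~\ref{equivalence_lemma_masks} in the paper is to avoid exactly this: because $[v_1]_\rightleftharpoons\times\cdots\times[v_H]_\rightleftharpoons\subseteq[A]$, the defender can keep one bucket per \emph{mask} (per node-equivalence class $[v]_\rightleftharpoons$, identified via Lemma~\ref{lemma_same_mask_same_prob}) and assemble any strategy class $[A]$ by drawing the required number of representatives from each mask's bucket. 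That gives storage linear in the number of masks. Your size estimate for the multiset indexing is also wrong: with $m$ masks there are $\binom{m+H-1}{H}$ classes, which for Bitcoin-scale $m$ (thousands of $\backslash16$ masks, cf.\ Figure~\ref{buckets_statistics}) and $H=8$ is astronomically larger than $|V|$ --- so the resulting buffer is not ``limited'' in any useful sense, and the theorem's content evaporates.

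There is also an interface mismatch: Algorithm~\ref{remfunc_imp} classifies each \emph{incoming node} into a bucket via \texttt{n.bucket}, so buckets must be attributes of single nodes. A mask-multiset is a property of an $H$-tuple, not of one node, so your instantiation cannot be fed to the algorithm as written; the per-mask bucketing can. To repair the proof, replace your middle paragraph with: take $EG$ to be the set of masks (legitimate by Lemma~\ref{lemma_same_mask_same_prob}), $bucket\_size=H$; by Lemma~\ref{remfunc_implementing_bloom_filter} each bucket holds a uniform sample of the nodes seen in that mask; by Lemma~\ref{equivalence_lemma_masks} composing these per-mask samples realizes a distribution supported inside $[A]$ for every $[A]\in\hat{S^1}$, whose elements are all strategically interchangeable, and then Theorem~\ref{reduce_theorm} / Corollary~\ref{equivalnece_safety_level_corollary} transfer the value back to $\gameval$. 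That is the argument your last paragraph gestures at, and it is the one that actually yields a small buffer.
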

\begin{proof}
	On one hand, we can implement any defender's strategy $[v_1, \cdots, v_H]$ using $[v_1]_\rightleftharpoons, \cdots, [v_H]_\rightleftharpoons$ (Lemma~\ref{equivalence_lemma_masks}). On the other hand, we can implement any $[v]_\rightleftharpoons$ using subnet masks (Lemma~\ref{lemma_same_mask_same_prob}). Finally, by using the masks as buckets, we can implement any strategy in $\hat{S}^1$.
\end{proof}

\subsection{Safety Level} \label{smaller_game_intuition}

Due to many difficulties in completely solving the original game, in this subsection we define the game $\easygameval$ wherein we limit the defender by reducing his available strategies.
Those restrictions give us only a safety level for the original game; however, it makes it easier to compute the strategies.

First, note that the defender should not distinguish between two masks that have the same number of nodes (in any equilibrium, masks with the same number of nodes are selected with the same probability). Therefore, the strategic equivalence classes are defined by mask size.

Denote a few marks: (i) $M_a$ the number of all the nodes in all the masks of size $a$; (ii) $avg_a = \frac{c_{new} + a \cdot c_{node}}{a}$ the average cost of node; (iii) $x_a \in \mathbb{N}$ the number of nodes that the attacker corrupts  in masks of size $a$. Therefore, the attacker invests at least $x_a \cdot avg_a$ money, and the fraction of corrupted nodes is $\frac{x_a}{M_a}$. 

Now, we ease the game by limiting the defender in two ways\footnote{We do so because it is hard to solve the original game.}: He cannot distinguish between two masks with the same size, and he must choose connections independently, each with probability $p_a$.

Denote the random variable $y_a \sim Bin(H, p_a)$ as the defender's choice to choose from mask-size $a$, and  $x'_a$ as $x_a$ if $x_a \ne 0$ or $M_a(1 - \mathbb{E}(y_a))^{\frac{1}{y_a}}$ otherwise. Then 
$\easygameval(\bar{x}, \bar{y}) \ge 
\sum_a x_a \cdot avg_a - \wattack \cdot \prod\limits_a\left( \frac{x'_a}{M_a} \right) ^ {y_a} $, and because $\frac{x'_a}{M_a} \le 1$ we get
$\ge \sum_a x_a \cdot avg_a - \wattack \cdot \prod\limits_a \left( \frac{x'_a}{M_a} \right) $.

I.e., the game value increases linearly with large constants but decreases exponentially with small constants. We will see that these constants make the game unprofitable for any attacker.

%

\subsubsection{Evaluate the value of $\easygameval$} \label{sec::evaluations}
To retrieve an actual value for the calculation above, we examined the behavior of the Bitcoin network. We collected a snapshot of all the nodes that connected to the network (using an honest node in the network and IPs from the site blockchain.info). From this data we collected statistics on the number of masks, and distribution of nodes within each one (see Figures~\ref{buckets_statistics} and \ref{mask_size_to_count}).

\begin{figure}[!ht]
	\centering
	\includegraphics[scale=0.38]{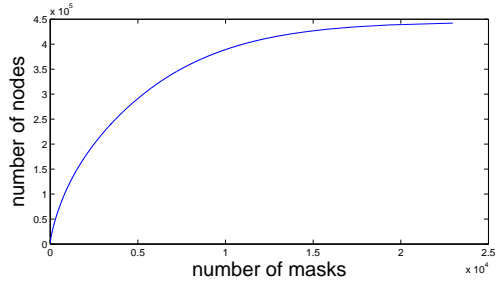}
	\captionsetup{width=.9\linewidth}
	\caption{Number of subnet masks$\backslash16$ in the Bitcoin network (Sep. 2015 data).}
	\label{buckets_statistics}
\end{figure}

\begin{figure}
	\centering
	\includegraphics[scale=0.45]{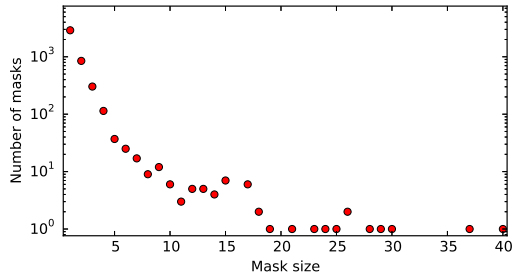}
	\captionsetup{width=.9\linewidth}
	\caption{Number of masks in each mask-size node from a mask.}
	\label{mask_size_to_count}
\end{figure}

We calculated the sizes of each mask size and got:
$ \prod\limits_a \frac{1}{M_a} \approx 8.991 \cdot 10 ^ {-52} $. Therefore the attacker needs to invest huge amounts of resources to gain a negative
 value.

\subsection{Comparing to a Naive Benchmark}
In this section we compare our results to a naive benchmark, where each node chooses its connections uniformly from the buffer, and the buffer is chosen uniformly from the whole network. In the first benchmark (the most naive), we assume that the node may be subject to repeated transmissions of the same IP, which it cannot detect unless that IP is already in the buffer. We assume that the buffer size is the current buffer size of Bitcoin's nodes ($=20480$ unique addr.). For the second benchmark, we assume uniform selection of IPs, but with an accompanying Bloom filter to filter out re-transmissions. We consider a network respectively to the masks in Bitcoin's topology. Considering our strategy, an attacker's response is to create nodes in such a way that $p_a$ is identical for any mask $a$. Then, we calculate the probability of being successfully attacked as a function of the attacker's investment. We choose the value $H=8$ using the number of connections that Bitcoin currently uses by default. The value for $\frac{c_{new}}{c_{node}}$ is deduced using the approximated prices on Amazon EC2. The results are shown in Figure \ref{avg_cost_naive} along with the probability that our own algorithm is successfully attacked. \footnote{The ``spike" in the graph is caused by the number of existing masks (if the attacker corrupts in all masks). In this case, the probability of a successful attack is the same as in the naive approach.}

\begin{figure} 
	\centering
	\includegraphics[width=0.6 \columnwidth]{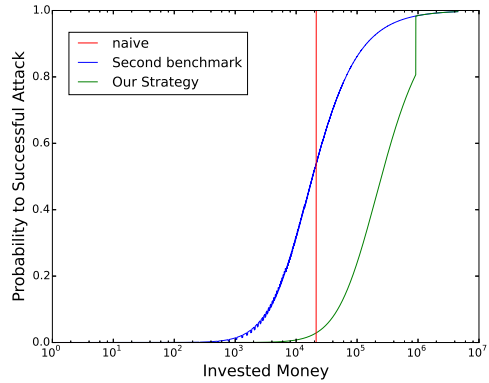}
	\caption{The probability of a successful attack by invested money.}
	\label{avg_cost_naive}
\end{figure}

\section{Future Work \& Conclusions} \label{sec::future_work}
In this work we explored a game theoretic model for P2P network formation. Our results indicate that by using a model for the cost of nodes for an attacker, it is possible to select each peer's connections so as to reduce the likelihood of being isolated by an attacker. We presented an example strategy that can be implemented in any P2P software's code (e.g. Bitcoin, BitTorrent, etc.). We then examined the cost of a successful attack in our model, and compared the results to other strategies.

Future work should extend the model to games that are not zero-sum, and try to better account for attackers attempting to isolate large chunks of the network (simulations we conducted show that the strategies that protect single nodes are also good for the network as a whole). As a motivated example, splitting the network into chunks in Bitcoin may disrupt the proportions of the computational power of the attacker and ease double-spending attacks.

This paper uses a single, existing network utility for the cost function (the IP mask) to ease the implementation on current applications. This is not mandatory, and there is more information about peers that we can utilize such as round-trip time, traceroute, and other network features. Another possibility is to augment nodes with additional information that will be hard for an attacker to fake. For example, it may be interesting to examine the proof-of-work concept as something that increases the reliability of a peer. 

Finally, we used a specific strategy and compared it to other benchmarks. Finding a way to solve the game mathematically (without assumptions or heuristic strategies), or finding a defender's strategy that is in some \nash, will create the optimal algorithm for any mask-based P2P system. This creates an optimal solution that we can all hope for within this model.

\bibliographystyle{LNCS/splncs03}
\bibliography{pyf}

\end{document}